\documentclass[11pt]{article}

\usepackage{amsthm,amssymb,amsfonts,amsmath}
\usepackage{amscd} 
\usepackage{mathrsfs}
\usepackage{graphicx}
\usepackage{color}
\usepackage{hyperref}
\usepackage{bm}
\usepackage[lined, ruled, linesnumbered]{algorithm2e}
\usepackage[margin=1.25in]{geometry}
\graphicspath{ {./Images/} }
\usepackage{authblk}

\usepackage[backend=bibtex,style=ieee-alphabetic,natbib=true]{biblatex} 
\addbibresource{bib.bib} 

\newcommand{\I}[1]{{\mathbf 1}_{\left[#1\right]}}

\newtheorem{thm}{Theorem}
\newtheorem{lem}[thm]{Lemma}

\newtheorem{cor}[thm]{Corollary}
\newtheorem{dfn}[thm]{Definition}

\numberwithin{thm}{section}
\numberwithin{equation}{section}

\begin{document}
\title{Perfect Sampling in Infinite Spin Systems\\ via Strong Spatial Mixing}
\author{Konrad Anand\thanks{Supported by a studentship from the School of Mathematical Sciences.} }
\author{Mark Jerrum\thanks{Supported by grant EP/S016694/1 `Sampling in hereditary classes' from the Engineering and Physical Sciences Research Council (EPSRC) of the UK.}}
\affil{School of Mathematical Sciences\\Queen Mary, University of London\\Mile End Road, London E1 4NS\\United Kingdom}
\date{}
\maketitle

\begin{abstract}
We present a simple algorithm that perfectly samples configurations from the unique Gibbs measure of a spin system on a potentially infinite graph~$G$.  The sampling algorithm assumes strong spatial mixing together with subexponential growth of~$G$.  It produces a finite window onto a perfect sample from the Gibbs distribution. The run-time is linear in the size of the window. 
\end{abstract}


\section{Introduction}

Many interesting computational problems have come out of statistical physics, in particular from the  study of spin systems. A wide variety of systems used and studied in probability, physics, machine learning, and theoretical computer science fall under this domain.  In brief, a spin system is defined by a graph~$G$, a finite (in our case) set of spins, and a local `specification'.  A configuration of the system is an assignment of spins to the vertices of the graph.  The local specification defines a probability distribution on the configurations of the system.  Although the specification is local, this Gibbs distribution can exhibit long-range effects.  A primary algorithmic goal in this area is to sample configurations from the Gibbs distribution.

Spin systems have been an active area of research from a computational perspective in the past few decades, with many algorithms proposed and many results on hardness achieved. Many of the most successful algorithms have been based on Markov chain simulation.  Particularly prominent is the familiar Glauber dynamics \cite{glauber1963} whose mixing time has remained an active area of research, with wonderful new results coming in recent years \cite{Chen2020OptimalMO, chen2020rapid}, prompted by pioneering work on high dimensional expanders \cite{KaufmanOppenheim,ALOV}. 

Deterministic approaches have also been proposed, based on decay of correlations \cite{weitz2006counting} or Taylor expansion in a zero-free region of the partition function \cite{PatelRegts,BarvinokSoberon}.  As with Markov chain simulation, these approaches provide a sample from an approximation to the desired distribution.  Another strand of research has been into perfect samplers.  Aside from the direct algorithmic challenge of producing perfect samples, there is another, more practical, motivation.  This motivation arises not from the exactness of the sampling distribution as such;  after all, the $t$-step distribution of a Markov chain converges exponentially in~$t$ to stationarity, so the deviation of the sampling distribution from the desired one can been made very small at modest computational expense.  More significant is having a definite termination condition, in contrast to Markov chain simulation which has to be carried on for `sufficiently many' steps.  This requires an a priori bound on the `mixing time' of a Markov chain, which must be analytically derived and may be much larger that actually necessary.  In contrast, perfect samplers can respond to characteristics of a problem  instance, and produce results more quickly on favourable instances.

Initially, it is not obvious that perfect sampling is possible in complex settings.  The first demonstration came with the Coupling From The Past (CFTP) approach of Propp and Wilson \cite{ProppWilson}.  There have been other recent approaches to the problem of efficient perfect sampling which move away from Markov chains, such as the partial rejection sampling of Guo, Jerrum and Liu \cite{guo2019uniform} and the randomness recycler of Fill and Huber \cite{fill2000randomness}.  Our contribution is in a similar direction.

Our particular goal in this work is to produce perfect samples from a spin system, even when  the underlying graph~$G$ may be infinite.  Although the specification of a spin system is local, it may, if the conditions are right, define a unique Gibbs measure on an infinite graph~$G$.  Classic examples of this phenomenon include the Ising and hardcore models on $\mathbb{Z}^d$, in the subcritical regime.  Since we cannot write down an infinite spin configurations, we need to explain what we mean by perfect sampling in this situation.  Consider a spin system on $\mathbb{Z}^2$ that has a unique Gibbs measure.  As an example of what we want to achieve, we might cite the problem of sampling (perfectly) from the the distribution of spins on an $L\times L$ square fragment of $\mathbb{Z}^2$ induced by the infinite Gibbs measure.   

Van den Berg and Steif \cite{vdBerg1999codings} pioneered perfect sampling from infinite Gibbs measures.  They showed that perfect sampling is possible in the case of the ferromagnetic Ising model on $\mathbb{Z}^2$ in the uniqueness regime, i.e., when there is a unique Gibbs measure.  They also gave a precise negative result that essentially rules out perfect simulation in non-uniqueness.  Their approach is through CFTP{}.  Spinka~\cite{spinka2020finitary}, in a wide-ranging investigation, has taken forward this approach.  In contrast we depart from CFTP, and take our inspiration from partial rejection sampling and related work.

The closest prior work to ours is that of Feng, Guo and Yin \cite{Feng2019PerfectSF}.  Like them, our key assumptions are that the spin system displays `strong spatial mixing', and that the underlying graph has `subexponential growth'.  Definitions of these terms can be found in Section~\ref{sec:requirements}.  Roughly speaking, we require correlations between spins to decay rapidly with distance, and local neighbourhoods of~$G$ to grow slowly enough with increasing radius.  Taken together, these conditions allow us to largely discount the effect of spins at far away vertices, which is key to the efficiency of our algorithm.  The conditions of strong spatial mixing and subexponential growth are quite standard and have been widely studied \cite{Goldberg2005StrongSM, weitz2006counting}.  Various authors have sought to leverage the property of strong spatial mixing to algorithmic ends \cite{Feng2019PerfectSF, spinka2020finitary, sinclair2013spatial,sinclair2017spatial}.  Intriguingly, it has been suggested \cite{Chen2020OptimalMO} that strong spatial mixing may be equivalent to spectral independence, a concept that is currently the subject of intensive study.  As for the second condition, subexponential growth currently seems important for perfect sampling, even though some approaches to approximate sampling dispense with it.

The main difference between our approach and that of Feng et al.\ is the following.  Their strategy is to start with a spin configuration on the whole of $G$, and gradually `repair' the configuration until it becomes a perfect sample.  As this approach seems inapplicable to infinite graphs, we instead `grow' the perfect sample one vertex at a time.  Our recursive algorithm tries to guess the spin at a vertex~$v$ without looking at spins that live on a sphere of radius~$\ell$ centred at~$v$; when it cannot guess, the algorithm recursively determines the spins on the sphere before returning to assign a spin to~$v$.

As it runs, algorithm traces out something akin to a branching process, whose expected size is finite with probability~1 when sufficiently strong spatial mixing holds. When successful, the algorithm is linear time:  for each vertex whose spin we sample, the algorithm takes constant time.  An advantage to the algorithm is that we do not need a configuration on an entire graph to determine the spin at a single vertex. This gives the possibility of drastically more efficient sampling than any algorithm requiring an entire configuration, and allows us to sample from infinite graphs as well. The algorithm provides a novel perspective on the use of strong spatial mixing on spin systems.  Indeed, the algorithm could be viewed as an effective proof that strong spatial mixing together with subexponential growth implies uniqueness of the Gibbs measure.

\section{Spin systems}
For the time being, we restrict attention to finite graphs, and defer the discussion of infinite graphs until later.
Given a finite graph $G = (V,E)$ and $q \in \mathbb N$ we work with the set of configurations
$\Omega_V = [q]^V$.
When there is no room for confusion, we will write $\Omega$ instead of~$\Omega_V$.  Naturally, when $W\subset V$ we write $\Omega_W=[q]^W$ for the partial configurations restricted to~$W$.

For $\sigma \in \Omega$ and $W \subset V$ we denote its restriction to $W$ by $\sigma_W$. For simplicity, when $W= \{v\}$ we will write $\sigma_v$ instead of $\sigma_{\{v\}}$. In this same situation, we denote the spin at $v$ by $i$, and write $\sigma_v=i$ rather than $\sigma_v=(i)$.

Given a field $b : [q] \to \mathbb R$ and a symmetric interaction weight $A: [q] \times [q] \to \mathbb R$, the Gibbs distribution gives to each configuration~$\sigma$ the weight
\begin{align*}
  \prod_{v \in V} b (\sigma_v) \prod_{(u,v) \in E} A(\sigma_u,\sigma_v).
\end{align*}
Define the partition function $Z(G)$ to be the sum of all possible weights
\begin{align*}
  Z(G) := \sum_{\sigma \in \Omega} \prod_{v \in V} b (\sigma_v) \prod_{(u,v) \in E} A(\sigma_u,\sigma_v).
\end{align*}
Then we define the Gibbs measure to be the measure $\mu : \Omega \to [0,1]$ where
\begin{align}\label{eq:specification}
  \mu(\sigma) = \frac{\prod_{v \in V} b (\sigma_v) \prod_{(u,v) \in E} A(\sigma_u,\sigma_v)}{Z(G)}.
\end{align}
Frequently in this paper we will want to look at the marginal distribution induced by fixing the configuration on some vertices. For any $W$, define the marginal distribution of $\mu$ on $W$ by 
$$
\mu_W(\sigma)=\sum_{\sigma'\in\Omega:\sigma'_W=\sigma}\mu(\sigma'),
$$
for all $\sigma \in \Omega_W$.  Also, for any $\Lambda \subset V$ and $\tau \in \Omega_\Lambda$, define the marginal distribution with boundary condition $(\Lambda,\tau)$ by 
\begin{align}\label{eq:conditional}
  \mu_W^{(\Lambda,\tau)}(\sigma) :=
     \frac1{\mu_{\Lambda}(\tau)} \sum_{\substack{\sigma'\in\Omega:\\ \sigma'_\Lambda = \tau, \sigma'_W = \sigma}}\!\!\mu(\sigma'),
\end{align}
assuming $\mu_\Lambda(\tau)\not=0$.
We abbreviate $\mu_W^{(\Lambda,\tau)}$ to $\mu_W^\tau$ when the set $\Lambda$ is clear from the context.
With this definition of the Gibbs measure, we have a fixed probability space determined by a few parameters. We call this a spin system $\mathcal S = (G, q, b, A).$

As usual, the \emph{Gibbs property} of $\mu$ is important to us.  Suppose $W\subset V$ and that the configuration $\tau\in\Omega_{V\setminus W}$ is \emph{feasible}, i.e., $\mu_{V\setminus W}(\tau)>0$.  Define the \emph{boundary} of~$W$ by 
    $$
    \partial W=\{v\in V\setminus W:\{u,v\}\in E\text{ for some }u\in W\}, 
    $$
and let $\tau'\in\Omega_{\partial W}$ be the restriction of $\tau$ to $\partial W$.  Then $\mu_W^{(V\setminus W,\tau)}=\mu_W^{(\partial W,\tau')}$; in other words, conditioning on the complement of $W$ is equivalent to conditioning on its boundary.

\section{Sketch of the basic algorithm}


Before describing the full algorithm and the properties it requires of the spin system, we pause to describe a basic version of the algorithm, the `radius 1' version of the full algorithm.  After this initial consideration, we will expand the algorithm to a radius determined by spatial mixing.

Suppose we start with a graph $G = (V,E)$ and we want to know the spin at a single vertex $v \in V$. We give this vertex a realisation $y$ of a $U[0,1]$ (i.e., uniform in the unit interval) random variable. If we knew the spins of the neighbours of $v$ it would be a simple task to partition the unit interval according to the probabilities of each possible spin at~$v$, and then use $y$ to determine the spin at~$v$.  Unfortunately, we do not and so we need some way around this.

Instead of just partitioning $[0,1]$ according to the respective spin probabilities, we partition it according to the minimum probability for each spin given any configuration on $v$'s neighbours. Let $W = \{w_1,...,w_d\}$ be the neighbours of $v$. For $i \in [q]$ define
\begin{align*}
  p_v^i = \min_{\tau \in \Omega_W} \mu_v^\tau(i).
\end{align*}
So $p_v^i$ is the minimum, over all assignments of spins to neighbours of~$v$, of the probability that vertex~$v$ should be assigned spin~$i$. 
For the purposes of the algorithm we will partition $[0,1]$ into intervals $I_1,...,I_q$ of sizes $p_v^1,...,p_v^q$ with a remaining interval which we call the `zone of indecision' $I_0$ of size $p_v^0 = 1 - \sum_{i = 1}^q p_v^i$.

If the point $y\in[0,1]$ selected earlier falls in $\cup_{i = 1}^q I_i$ then we can assign a spin to $v$. If not, we need to know the spins of the neighbours in order to split $I_0$ into different spin regions. We will call the above process on each of the neighbours recursively in turn.

How we make these recursive calls is important---the outcome for $W$ must follow the marginal distribution $\mu_W$, not the product of the marginal distributions at each neighbour, that is, for $\sigma \in \Omega_W$ we want
\begin{align*}
  \mu_W(\sigma) = \mu_{w_1}(\sigma_{w_1}) \cdot \mu_{w_2}^{\sigma_{w_1}}(\sigma_{w_2}) \cdot \cdot \cdot \mu_{w_d}^{\sigma_{W - w_d}} (\sigma_{w_d}).
\end{align*}

To this end, when we call the algorithm on $w_j$, we condition on the spins of $w_1,...,w_{j-1}$ to ensure the correctness of the probabilities. Once we have the spins at each of the neighbours, we subdivide the zone of indecision $I_0$ into $q$~intervals corresponding to the different spins.  We now use $y$ to determine the spin at $v$ and discard the intermediate working (including any spins applied to vertices other than~$v$) that we have used during the algorithm.

Why should this strategy work?  Well, if we chose the spins $\sigma_W$ first, then subdivided the interval $[0,1]$ appropriately, and then chose a uniform random point~$y$ in the interval, we would certainly assign a spin to~$v$ with the correct probability.  We merely observe that, if we choose the random point $y$ first, then at least some of the time we do not need to know $\sigma_W$ and we can omit the first step.  If the probability of omitting the first step is large enough, we can avoid an infinite recursion.

\subsection{Example: the hardcore model}\label{sec:hardcore}

A classic example of a spin system is the hardcore model. The hardcore model is a distribution over independent sets of a graph, which we model as a spin system with $q=2$ spins.  For consistency with the notation already introduced, we label these spins $1$ and $2$ rather than the usual $0$ and $1$.  The one constraint is that neighbouring vertices are not allowed to both have spin~2. The hardcore distribution gives independent sets a weight depending on their size. To be explicit, the associated field and interaction weights are:

\begin{align*}
  b = 
\begin{pmatrix}
  1 \\
  \lambda
\end{pmatrix}, \quad A =
\begin{pmatrix}
  1 && 1 \\
  1 && 0
\end{pmatrix}.
\end{align*}

Now, to see the basic algorithm in action, we consider its behaviour on the hardcore model. Calling the algorithm at a vertex~$v$, we see that the $p_v^1 = \frac1{1+\lambda}$ and $p_v^2 = 0$.  (If one of the neighbours of~$v$ has spin~2, then we cannot assign spin~2 to vertex~$v$.  So $p_v^2 = 0$.  On the other hand, if all the neighbours have spin~1 there is just a probability $\frac1{1+\lambda}$ that we will assign spin~1 to $v$.  So $p_v^1 = \frac1{1+\lambda}$.)  Thus we set the spin to~1 immediately with probability $\frac1{1+\lambda}$ and call the algorithm recursively on the neighbours with probability $\frac\lambda{1 + \lambda}$. 

Consider the recursive calls on the neighbours, $w_1,...,w_d$. If the algorithm sets the spin of any $w_i$ to 2 then we set the spin of $v$ to~$1$ and terminate the algorithm. Otherwise, the neighbouring vertices all have spin 1. Conditioned on this event, $v$ has probability $\frac1{1 + \lambda}$ of being 1 and probability $\frac\lambda{1 + \lambda}$ of being 2. Thus the entire zone of indecision is moved to the case of the spin being $2$ and we terminate with this assignment.

We can view the recursive calls of the algorithm as a random tree $T_A$. By comparing $T_A$ to a branching process, we will see that for graphs with maximum degree $\Delta$, when $\lambda < \frac1{\Delta - 1}$ the expected time to set the spin of one vertex is $O(1)$.

Each vertex in $T_A$ will have no children when $y \in I_1 \cup I_2$ and will have a child for each neighbour when $y \in I_0$. The number of neighbours is bounded by $\Delta$ and the probability of having children is $p_v^0 = \frac\lambda{1+\lambda}$, so we can bound the size of $T_A$ by a branching process with the offspring distribution $\xi$ where 
\begin{align*}
    \mathbb P[\xi = 0] = \frac1{1 + \lambda}, \quad \mathbb P[\xi = \Delta] = \frac\lambda{1 + \lambda}.
\end{align*}
The expected size of such a branching process is finite when
\begin{align*}
  \Delta \cdot\frac\lambda{1+\lambda} < 1 \iff \lambda < \frac1{\Delta - 1}.
\end{align*}
To be exact, the expected size of this branching process---and the expected run-time of the algorithm---is
\begin{align*}
  \frac1{1-\Delta \cdot\frac\lambda{1+\lambda}} = \frac{1+ \lambda}{1 -(\Delta-1) \lambda}.
\end{align*}
Thus, the size of $T_A$ is finite when $\lambda < \frac1{\Delta - 1}$ and in this case the expected run-time of the algorithm is bounded by $\frac{1+ \lambda}{1 -(\Delta-1) \lambda}$.

\section{Spin systems on infinite graphs:  our requirements}\label{sec:requirements}
Up to this point, our graphs have been finite, but we now want to extend the discussion to infinite graphs.  Although we now allow the vertex set to be countably infinite, we do assume that graphs are locally finite, i.e., that all vertex degrees are finite.  For perfect sampling to make sense, we need the system we are dealing with to have unique infinite Gibbs measure.  
Briefly, a Gibbs measure $\mu$ for an infinite graph $G$, with interaction weights~$A$ and field~$b$, is a probability measure that has the following properties:
\begin{itemize}
    \item Events that are determined by a finite set of spin variables have a well defined probability of occurrence.  Technically, cylinder events are measurable.
    \item For any finite vertex subset $W\subset V$, the probability distribution of $\sigma_W$ (i.e., the marginal distribution on~$W$) is consistent with that given by equation \eqref{eq:specification} for the finite graph $G[W]$.  Specifically, suppose $W\subset V$ is any finite vertex subset, and define the boundary of~$W$ by
    $$
    \partial W=\{v\in V\setminus W:\{u,v\}\in E\text{ for some }u\in W\}. 
    $$
    Then, for all $\tau\in\Omega_{\partial W}$, the probability distribution $\mu_W^{(\partial W,\tau)}$ (which is well defined as it is determined by a finite set of spin variables) agrees with equation~\eqref{eq:conditional}, where $\Omega=[q]^{W\cup\partial W}$ and $\Lambda=\partial W$.
\end{itemize}
When there is a unique infinite Gibbs measure satisfying the above conditions we say that the system is in the `uniqueness regime'. For a thorough treatment of these ideas see, e.g., Friedli and Velenik~\citep{friedli_velenik_2017}.

Luckily, we will require strong spatial mixing for the algorithm, which itself implies that there is a unique infinite measure~$\mu$ with the above properties.  A perfect sampling algorithm for~$\mu$ is required to open up a finite window onto a perfect sample from~$\mu$. 

The basic algorithm, which was sketched in the context of the hardcore model, made recursive calls on the immediate neighbours of vertex~$v$.  It is correct, but works for only small values of $\lambda$.  We therefore analyse a more general, radius~$\ell$, version of the algorithm, which recurses on vertices at distance $\ell$ from $v$, rather than its immediate neighbours.  Let $d_G$ denote graph distance and 
\begin{align*}
  S_\ell=S_\ell(v) := \{w \in V  : d_G(v,w) = \ell\}
\end{align*}
denote the sphere of radius~$\ell$ about vertex~$v$. Let $(\Lambda,\sigma)$ be the partial assignment of existing spins.  Then we redefine probabilities with which we partition the unit interval. Let
\begin{align*}
    p_v^i &:= \min_{\tau \in \Omega_{S_\ell\setminus\Lambda}} \mu_v^{(\Lambda,\sigma) \oplus(S_\ell\setminus\Lambda,\tau)}(i) \quad \forall i = 1,...,q, \\
  p_v^0 &:= 1 - \sum_{i \in [q]} p_v^i.
\end{align*}
Here, $\oplus$ denotes `concatenation' of spins.  Thus $(\Lambda,\sigma) \oplus(S_\ell\setminus\Lambda,\tau)$ is an assignment of spins to $\Lambda\cup S_\ell$ that agrees with $\sigma$ on $\Lambda$ and $\tau$ on $S_\ell\setminus\Lambda$. We will simplify this notation to $\sigma\oplus\tau$ if the domains of $\sigma$ and~$\tau$ are clear from the context.

To analyse the efficiency of the algorithm, we consider its behaviour as a branching process where at a given vertex $v$ the probability of no children is $1 - p_v^0$ and the probability of deg$(v)$ children is $p_v^0$. We want to limit this branching, and to do so we leverage strong spatial mixing. Say that a partial assignment $(W,\tau)$ is \emph{feasible} if $\mu_W(\tau)>0$.  Also denote total variation distance between two distributions by $d_\mathrm{TV}$. 

\begin{dfn}
Let $\mathcal S$ be a spin system and $f : \mathbb N \to \mathbb R^+$ a function.  Given $W\subset V$ and $\tau^1,\tau^2\in\Omega_W$, let $\ell = \min\{d_G(v,w) : w \in W, \tau_w^1 \not= \tau_w^2\}$. If for all $W$ with $V\setminus W$ finite, all $v\in V\setminus W$, and all feasible $\tau^1, \tau^2 \in \Omega_W$ with $\tau^1\not=\tau^2$,
\begin{align*}
  d_\mathrm{TV}\left(\mu_v^{\tau^1},\mu_v^{\tau^2} \right) \leq f(\ell),
\end{align*}
then we say that $\mathcal S$ exhibits \emph{strong spatial mixing} with rate~$f$.
\end{dfn}
Note that there is no problem giving a meaning to the conditional distributions $\mu_v^{\tau^1}$ and $\mu_v^{\tau^2}$:
since $V\setminus W$ finite, the Gibbs property of $\mu$ ensures that $\mu_v^{\tau^1}$ and $\mu_v^{\tau^2}$ depend on a finite set of spins.
We want to use this bound to control the size of the zone of indecision which branches the algorithm.

\begin{lem}\label{lem:pvzerobound}
    For any $\Lambda\subset V$, $v \in V\setminus\Lambda$, and feasible partial configuration $(\Lambda,\sigma)$, 
	\begin{align*}
	  p_v^0 \leq q \cdot \max_{\tau^1, \tau^2 \in \Omega_{S_\ell(v)\setminus\Lambda}} d_{\mathrm{TV}}\left(\mu_v^{\sigma \oplus \tau^1}, \mu_v^{\sigma \oplus \tau^2} \right),
	\end{align*}
	where the maximisation is over all feasible $\sigma\oplus\tau^1$ and $\sigma\oplus\tau^2$. 
\end{lem}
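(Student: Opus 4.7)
The plan is to rewrite $p_v^0$ as a sum of $q$ pointwise differences of conditional probabilities, and then bound each such difference by a total variation distance between two such conditional distributions.

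To carry this out, I first fix a reference extension $\tau \in \Omega_{S_\ell(v) \setminus \Lambda}$ such that $\sigma \oplus \tau$ is feasible. Such a $\tau$ exists because $(\Lambda,\sigma)$ being feasible means $\mu_\Lambda(\sigma) > 0$, and $\mu_\Lambda(\sigma)$ decomposes as a sum over extensions to $S_\ell(v) \setminus \Lambda$, so at least one extension must carry positive mass. For each spin $i \in [q]$, I then let $\tau^i \in \Omega_{S_\ell(v)\setminus\Lambda}$ be a minimiser attaining $p_v^i = \mu_v^{\sigma \oplus \tau^i}(i)$, where the minimum is implicitly over extensions making $\sigma \oplus \tau^i$ feasible (otherwise the conditional probability is not even defined).

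With these choices in hand, I would use the fact that $\mu_v^{\sigma \oplus \tau}$ is a probability distribution on $[q]$, so $\sum_{i=1}^q \mu_v^{\sigma\oplus\tau}(i) = 1$, giving
\begin{align*}
  p_v^0 \;=\; 1 - \sum_{i=1}^q p_v^i \;=\; \sum_{i=1}^q \bigl( \mu_v^{\sigma \oplus \tau}(i) - \mu_v^{\sigma \oplus \tau^i}(i) \bigr).
\end{align*}
By the minimality of $\tau^i$, each summand is nonnegative. Using the standard characterisation $d_\mathrm{TV}(\mu,\nu) = \max_A |\mu(A) - \nu(A)|$ applied to the singleton event $\{i\}$, each summand is bounded by $d_\mathrm{TV}\bigl(\mu_v^{\sigma\oplus\tau}, \mu_v^{\sigma \oplus \tau^i}\bigr)$, which in turn is at most the maximum in the statement. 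Summing the $q$ bounds yields the claimed inequality.

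The only real obstacle is the feasibility bookkeeping: one must ensure that each conditional distribution invoked is well-defined, and that the minimum defining $p_v^i$ is taken over a nonempty set of feasible extensions. This is handled by the observation in the first step above; once we know a feasible reference $\tau$ exists, and interpret each $p_v^i$ as a minimum over feasible extensions, everything else is arithmetic.
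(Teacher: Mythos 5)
Your proof is correct and follows essentially the same route as the paper: both fix a feasible reference extension $\tau$, rewrite $p_v^0 = \sum_{i\in[q]}\bigl(\mu_v^{\sigma\oplus\tau}(i) - p_v^i\bigr)$, and bound each (nonnegative) summand by a total variation distance over pairs of extensions. The only difference is presentational---you name the minimising extensions $\tau^i$ explicitly and spell out the feasibility bookkeeping that the paper leaves implicit.
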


\begin{proof}
For convenience, set $W=S_\ell(v)\setminus\Lambda$.
Let $\tau' \in \Omega_W$ be an arbitrary configuration with $\sigma\oplus\tau'$ feasible. Then
\begin{align*}
  p_v^0 = 1 - \sum_{i \in [q]} p_v^i = \sum_{i \in [q]} \left(\mu_v^{\sigma\oplus\tau'}(i) - p_v^i \right) \leq \sum_{i \in [q]} \max_{\tau \in \Omega_W} \left(\mu_v^{\sigma\oplus\tau}(i) - p_v^i\right).
\end{align*}
Clearly, for every $i \in [q]$ it holds that
\begin{align*}
  \max_{\tau \in \Omega_W} \left(\mu_v^{\sigma\oplus\tau}(i) - p_v^i\right) \leq \max_{\tau^1, \tau^2 \in \Omega_{W}} d_{\mathrm{TV}} \left(\mu_v^{\sigma\oplus\tau^1},\mu_v^{\sigma\oplus\tau^2} \right),
\end{align*}
so it follows that
\begin{align*}
  p_v^0 \leq \sum_{i \in [q]} \max_{\tau \in \Omega_W} \left(\mu_v^{\sigma\oplus\tau}(i) - p_v^i \right) \leq q \cdot \max_{\tau^1, \tau^2 \in \Omega_W} d_{\mathrm{TV}}\left(\mu_v^{\sigma\oplus\tau^1}, \mu_v^{\sigma\oplus\tau^2} \right).
\end{align*}
\end{proof}

It is worth pointing out that we cannot do much better than this bound. Consider the family of distributions $\{\mu_j : j \in [q]\}$ over $[q]$ such that
\begin{align*}
  \mu_j(i) = 
  \begin{cases}
  	0 &: \text{if}\ i = j, \\
  	\frac 1{q-1} &: \text{otherwise}.
  \end{cases}
\end{align*}
Then $p_v^0 = 1$ and for all $j,k \in [q], j \not= k$
\begin{align*}
  d_{\mathrm{TV}} (\mu_j, \mu_k) = \frac1{q-1}.
\end{align*}
We will leverage this bound and strong spatial mixing for the full algorithm.

Finally, a useful property of a graph is \textit{sub-exponential neighbourhood growth}.
\begin{dfn}
Let $G = (V,E)$. We say that $G$ has sub-exponential neighbourhood growth if for all $c > 1$ there exists an $N \in \mathbb N$ such that for all $v \in V$ and all $\ell > N$
\begin{align*}
  |S_\ell(v)| < c^\ell.
\end{align*}
\end{dfn}
This property provides a restriction on how fast the number of recursive calls in our algorithm grows.

\section{The algorithm in detail}

The general algorithm is quite similar to the basic algorithm, but rather than conditioning on neighbours we condition on $S_\ell$, the sphere of radius~$\ell$ around our chosen vertex.

The algorithm takes a spin system on a graph~$G$ with a partial configuration $\sigma$ on some subset~$\Lambda$ of the vertices. (Informally, the partial configuration $(\Lambda,\sigma)$ represents the spins that have already been determined.)  The algorithm is then called on some unassigned vertex~$v$ with a distance parameter $\ell$.  Recall that $S_\ell(v) := \{w \in V: d_G(v,w) = \ell\}$ is the sphere of radius~$\ell$ centred at vertex~$v$.  Recall also the probabilities with which we split the unit interval used to decide the spin at~$v$:
\begin{align*}  \label{ps}
  p_v^i &:= \min_{\tau \in \Omega_{S_\ell\setminus\Lambda}} \mu_v^{(\Lambda,\sigma)\oplus(S_\ell\setminus\Lambda,\tau)}(i) \quad \forall i = 1,...,q, \\
  p_v^0 &:= 1 - \sum_{i \in [q]} p_v^i.
\end{align*}
We draw a sample $y \sim U[0,1]$ from the uniform distribution on $[0,1]$. If $y$ is in a section of the interval corresponding to $p_v^i$, with $i\not=0$, then we set the spin to $i$---if not, we recursively call the algorithm on the vertices in $S_\ell\setminus\Lambda$ in turn, each time conditioning on the assignments to the previous vertices in this set. (See Algorithm~\ref{alg:SSMS}.)

\begin{algorithm}[ht]
\DontPrintSemicolon
 \caption{$\textsc{ssms}(\mathcal S = (G = (V,E), q, b, A), (\Lambda, \sigma), v, \ell)$}\label{alg:SSMS}
 \KwIn{The algorithm takes a spin system $\mathcal S$, a set of known vertices $\Lambda \subset V$ with a configuration $\sigma \in \Omega_\Lambda$, a vertex to sample $v \notin \Lambda$, and a distance $\ell \in \mathbb N$.}
 \KwOut{The algorithm returns the partial configuration passed in with a spin at $v$ as well: $(\Lambda, \sigma) \oplus (v,i)$ for some $i \in [q]$.}
 $S_\ell \leftarrow \{w \in V : d_G(v,w) = \ell\}$ \;
\For{$i \in [q]$}{
   $p_v^i \leftarrow \min_{\tau \in \Omega_{S_\ell\setminus\Lambda}} \mu_v^{\sigma \oplus \tau}(i)$ \;
   $I_i \leftarrow \left[\sum_{j = 1}^{i-1} p_v^j, \sum_{j = 1}^i p_v^j\right)$ \;
   }
 $p_v^0 \leftarrow 1 - \sum_{i \in [q]} p_v^i$ \;
 $I_0 \leftarrow \left[1-p_v^0,1\right]$ \;
 Sample $y \sim U[0,1]$, a realisation of a uniform $[0,1]$ random variable. \;
 \eIf{$y \in I_0$}{
   $(J_1,...,J_q) \leftarrow \textsc{bd-split} (\mathcal S, (\Lambda, \sigma), v, \ell, (p_v^1,...,p_v^q))$ \;
   Find $i \in [q]$ such that $y \in J_i$ \;
   \textbf{return} $((\Lambda, \sigma) \oplus (v,i))$ \;
   }{
     Find $i$ such that $y \in I_i$ \;
     \textbf{return} $((\Lambda, \sigma) \oplus (v,i))$ \;
   }
\end{algorithm}

Once the spins on $S_\ell\setminus\Lambda$ (and hence $S_\ell\cup\Lambda$) are decided, we subdivide the interval~$I_0$ (the zone of indecision) into $q$~subintervals, whose lengths can now be calculated precisely in subroutine \textsc{bd-split}, by brute force.  (See Algorithm~\ref{alg:BD-SPLIT}.)  The spin at the original vertex~$v$ is now determined using the original sample $y$ from $U[0,1]$.  

\begin{algorithm}[ht]
\DontPrintSemicolon
 \caption{\textsc{bd-split} $(\mathcal S, (\Lambda, \sigma), v, \ell, (p_v^1,...,p_v^q))$}\label{alg:BD-SPLIT}
 Give $S_\ell(v)\setminus\Lambda$ an ordering $S_\ell(v)\setminus\Lambda = \{w_1,...,w_m\}$. \;
   $(\Lambda',\sigma') \leftarrow (\Lambda,\sigma)$\;
   \For{$j \in [1,...,m]$}{
     $(\Lambda',\sigma') \leftarrow $\textsc{ssms}$(\mathcal S, (\Lambda',\sigma'), w_j, \ell)$
     }
   \For{$i \in [q]$}{
     $\rho_v^i \leftarrow \mu_v^{\sigma'}(i) - p_v^i$ \;
     $I_i \leftarrow \left[\sum_{j = 1}^{q} p_v^j + \sum_{k = 1}^{i-1}\rho_v^i, \sum_{j = 1}^{q} p_v^j + \sum_{k = 1}^{i}\rho_v^i \right)$ \;
     }
   \textbf{return} ($I_1,...,I_q$)
\end{algorithm}

As we will see more concretely, there is a trade-off in this algorithm. As the radius~$\ell$ grows, the calculations to determine the subdivision of $I_0$ rapidly increase, but the expected number of recursive calls will decrease.

An important thing to notice about the algorithm is that it does not terminate for all choices of $\mathcal S$ and $\ell$, even if $G$ is finite.  We will see that the correctness of the algorithm essentially depends on whether it terminates with probability~1.

\subsection{Run-time}

Our run-time analysis is in terms of the number of calls of \textsc{ssms}. For fixed $\ell$, the amount of time each call takes will be $O(1)$.

Our viewpoint is that the algorithm behaves similarly to a branching process where each recursive call of the algorithm is a child of the vertex which initiated the call. This comparison is not exact of course, as the calls at different vertices will not branch according to one distribution, but it is enough to give us a broad class of situations where the algorithm terminates.

\begin{thm} \label{runtime}
	Let $\mathcal S = (G = (V,E),q,b,A)$ be a spin system which exhibits strong spatial mixing with rate $f$.  Suppose that we have a growth bound for~$G$ of the form 
	\begin{align*}
      |S_\ell(v)| \leq g(\ell),
    \end{align*}
    for all $v \in V$.
    Let $\ell \in \mathbb N$ be such that 
    \begin{align*}
      q \cdot f(\ell) \cdot g(\ell) \leq \alpha < 1,
    \end{align*}
    Then for all $v \in V$, $\Lambda \subset V$, and $\sigma \in \Omega_\Lambda$ such that
    \begin{align*}
      \mu_\Lambda(\sigma) > 0,
    \end{align*}
    the expected number of times \textsc{ssms}$(\mathcal S, (\Lambda,\sigma), v, \ell)$ calls \textsc{ssms} (in total) is bounded by $\frac1{1-\alpha}$.
\end{thm}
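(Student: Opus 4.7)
The plan is to view the execution of \textsc{ssms} as a random tree $T_A$ of recursive calls---the root being the initial call and the children of each internal node being the recursive calls it spawns inside \textsc{bd-split}---and to bound $\mathbb{E}|T_A|$ by a levelwise estimate that mimics a branching process with mean offspring at most~$\alpha$. Let $X_k$ denote the number of calls at depth $k$, so $X_0=1$, and let $T=\sum_{k\ge 0}X_k$ be the total call count. I aim to establish $\mathbb{E}[X_{k+1}\mid\mathcal{F}_k]\le\alpha X_k$, where $\mathcal{F}_k$ encodes the execution through depth~$k$; this gives $\mathbb{E}[X_k]\le\alpha^k$ by induction, and summing the geometric series produces $\mathbb{E}[T]\le 1/(1-\alpha)$.

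The crux is a uniform per-call bound on the expected number of children. Consider any call \textsc{ssms}$(\mathcal{S},(\Lambda',\sigma'),u,\ell)$ occurring somewhere in the recursion, where $(\Lambda',\sigma')$ is the feasible partial configuration accumulated up to that point. Such a call branches only when the sample $y$ lands in $I_0$, which happens with probability $p_u^0$, and when it does branch it spawns exactly $|S_\ell(u)\setminus\Lambda'|\le g(\ell)$ recursive calls (inside \textsc{bd-split}, one per unassigned sphere vertex). Applying Lemma~\ref{lem:pvzerobound} at $u$ with the partial configuration $(\Lambda',\sigma')$ gives
\[
p_u^0\le q\cdot\max_{\tau^1,\tau^2\in\Omega_{S_\ell(u)\setminus\Lambda'}}d_{\mathrm{TV}}\bigl(\mu_u^{\sigma'\oplus\tau^1},\mu_u^{\sigma'\oplus\tau^2}\bigr),
\]
with the maximum over feasible pairs. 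The configurations $\sigma'\oplus\tau^1$ and $\sigma'\oplus\tau^2$ agree on $\Lambda'$ and can disagree only at vertices of $S_\ell(u)\setminus\Lambda'$, each of which lies at graph distance exactly $\ell$ from $u$; strong spatial mixing with rate $f$ therefore bounds each such TV distance by $f(\ell)$. Hence $p_u^0\le q\,f(\ell)$, and the conditional expected number of children at this call is at most $p_u^0\cdot g(\ell)\le q\,f(\ell)\,g(\ell)\le\alpha$.

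Summing this uniform bound over the $X_k$ calls at depth $k$ by linearity of conditional expectation yields $\mathbb{E}[X_{k+1}\mid\mathcal{F}_k]\le\alpha X_k$, and the bound on $\mathbb{E}[T]$ follows as outlined. The main technical point requiring care is the uniformity of the per-call bound: $p_u^0\le q\,f(\ell)$ must hold for \emph{every} feasible partial configuration $(\Lambda',\sigma')$ that could conceivably arise at any depth along any path from the root of $T_A$, not merely for the initial configuration $(\Lambda,\sigma)$ at the root. This uniformity is built into strong spatial mixing, since the rate $f(\ell)$ depends only on the separating distance $\ell$ and not on the boundary values, which is exactly what allows the levelwise recursion to close cleanly without any independence assumption on successive generations.
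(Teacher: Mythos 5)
Your proof is correct and rests on the same key insight as the paper's: combine Lemma~\ref{lem:pvzerobound} with strong spatial mixing to get the uniform per-call bound $p_u^0 \le q\, f(\ell)$, so that each node of the recursion tree $T_A$ has conditional expected number of children at most $q\, f(\ell)\, g(\ell) \le \alpha$. Where you diverge is in the final bookkeeping. The paper compares $T_A$ to an explicit branching process $T_B$ with offspring distribution $\mathbb{P}[\xi=0]=1-q f(\ell)$, $\mathbb{P}[\xi=g(\ell)]=q f(\ell)$, asserts stochastic domination $|T_A|\preceq|T_B|$, and invokes the standard expected-size formula for a subcritical branching process. You instead run a levelwise conditional-expectation argument, $\mathbb{E}[X_{k+1}\mid\mathcal{F}_k]\le\alpha X_k$, hence $\mathbb{E}[X_k]\le\alpha^k$, and sum the geometric series. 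Your version is arguably a little tighter as written: the paper itself concedes that the branching-process comparison ``is not exact'' because the nodes of $T_A$ do not branch independently or identically, so its domination claim is somewhat informal, whereas your argument needs only linearity of expectation together with the \emph{uniformity} of the per-call bound over all feasible partial configurations $(\Lambda',\sigma')$ that can arise during the recursion --- a point you rightly flag as the crux, and which is exactly what strong spatial mixing supplies since $f(\ell)$ depends only on the separating distance and not on the boundary values. Both routes arrive at the same bound $1/(1-\alpha)$.
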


\begin{proof}
	Let us construct a random tree $T_A$ out of our recursive calls. Our root will be our initial call. Recursively, if at some node $w$ of the tree we fall in the zone of indecision, we add a child to the node for every recursive call that is initiated by \textsc{bd-split}. If we do not fall in the zone of indecision, then $w$ has no children.
	
	Let $T_B$ be the branching process with offspring distribution $\xi$ where
	\begin{align*}
      \mathbb P[\xi = 0] = 1 - q \cdot f(\ell), \quad \mathbb P[\xi = g(\ell)] = q \cdot f(\ell).
    \end{align*}
    Then $T_B$ stochastically dominates $T_A$. Indeed, at every node $w \in T_A$, the probability of falling in the zone of indecision is bounded by $q \cdot f(\ell)$ due to the spin system's strong spatial mixing. Further, when $w$ does have children, the number of children is bounded above by $|S_\ell(w)|\leq g(\ell)$.
    
    Now we can leverage the theory of branching processes to finish off our bound. We know that
    \begin{align*}
      \mathbb E [\xi] \leq \alpha < 1,
    \end{align*}
    so by the fundamental theorem of branching processes, we know that
    \begin{align*}
    	\mathbb E[|T_A|] \leq \mathbb E[|T_B|] \leq \frac1{1 - \alpha}. 
    \end{align*}
\end{proof}

From this, we see that under appropriate conditions we can sample sections of graphs of size $n$ in time $O(n)$. It is also worth noticing that the above conditions give termination of the algorithm with probability 1.  Note that, unless $\ell$ is very small, the computation required to compute the subdivision of $I_0$ in \textsc{bd-split} will be huge.  So although the expected time to compute one spin is constant, that constant may be impractically large.  This will be a particular issue when the system is close to non-uniqueness.  

The qualitative nature of the main result in this section is summarised in the following corollary.

\begin{cor} \label{general}
	Let $G = (V,E)$ be a graph with sub-exponential growth bounded by $g$ and let $\mathcal S = (G,q,b,A)$ be a spin system which exhibits strong spatial mixing with rate $f$ which is inversely exponential.
    Consider any choices of $v \in V$, $\Lambda \subset V$, and $\sigma \in \Omega_\Lambda$. Then 
    the running time $\tau$ of \textsc{ssms}$(\mathcal S, (\Lambda,\sigma), v, \ell)$ satisfies
    $\mathbb E[\tau] = O(1)$.
\end{cor}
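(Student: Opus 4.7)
The plan is to reduce the corollary to a direct application of Theorem~\ref{runtime} by choosing a sufficiently large (but fixed) radius parameter~$\ell$. Since $f$ is inversely exponential, there exist constants $C,\beta>0$ with $f(\ell)\le C\,e^{-\beta\ell}$ for all~$\ell$. The hypothesis on $G$ gives, for every $c>1$, an $N\in\mathbb N$ such that $g(\ell)<c^\ell$ for all $\ell>N$.

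First, I would fix a constant $c$ with $1<c<e^{\beta}$, and let $N$ be the threshold supplied by the sub-exponential growth condition for this particular~$c$. Then for every $\ell>N$,
\begin{align*}
  q\cdot f(\ell)\cdot g(\ell) \;\le\; qC\cdot (c\,e^{-\beta})^{\ell},
\end{align*}
and since $c\,e^{-\beta}<1$, the right-hand side tends to $0$ as $\ell\to\infty$. Choose and fix an $\ell_0>N$ large enough so that
\begin{align*}
  \alpha \;:=\; q\cdot f(\ell_0)\cdot g(\ell_0) \;<\; 1.
\end{align*}

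With this $\ell_0$ in hand, Theorem~\ref{runtime} immediately yields that for any $v\in V$, $\Lambda\subset V$, and feasible $\sigma\in\Omega_\Lambda$, the expected number of calls to \textsc{ssms} made by $\textsc{ssms}(\mathcal S,(\Lambda,\sigma),v,\ell_0)$ is bounded by $\tfrac{1}{1-\alpha}$, a constant depending only on~$\mathcal S$, $f$ and~$g$. The paper has already observed that, for any fixed $\ell$, each individual call to \textsc{ssms} runs in time $O(1)$, since the sphere $S_{\ell_0}(v)$ has size at most $g(\ell_0)$ and the probabilities $p_v^i$ and the subdivision performed by \textsc{bd-split} are computations of bounded size. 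Combining these two facts gives $\mathbb E[\tau]=O(1)$, as required.

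The corollary is essentially a qualitative packaging of Theorem~\ref{runtime}, so I do not expect any real obstacle; the only content lies in matching the inversely exponential decay of~$f$ against the sub-exponential growth of~$G$ by choosing the base $c$ of the growth bound strictly below $e^{\beta}$, which guarantees that $f\cdot g$ can be driven below $1/q$ by taking $\ell$ sufficiently large.
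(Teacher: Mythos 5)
Your proof is correct and follows the same route as the paper's one-line proof, which simply invokes Theorem~\ref{runtime} after choosing $\ell$ large enough that $q\cdot f(\ell)\cdot g(\ell)\le\alpha<1$. You have merely made explicit the elementary calculation (comparing the inverse-exponential decay rate of $f$ against a sub-exponential growth base $c<e^{\beta}$) that justifies why such an $\ell$ exists.
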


\begin{proof}
In Theorem~\ref{runtime}, choose $\ell$ large enough to satisfy the inequality $q \cdot f(\ell) \cdot g(\ell) \leq \alpha < 1$.
\end{proof}

\subsection{Correctness}

Our understanding of the correctness of the algorithm depends on our view of the algorithm as a branching process.

\begin{thm} \label{correctness}
	Let $\mathcal S = (G = (V,E),q,b,A)$ be a spin system which exhibits strong spatial mixing with rate $f$. Suppose for all $v \in V$, all finite $\Lambda \subset V$, and all $\sigma \in \Omega_\Lambda$ such that
    \begin{align*}
      \mu_\Lambda(\sigma) > 0,
    \end{align*}
    \textsc{ssms}$(\mathcal S, (\Lambda,\sigma), v, \ell)$ terminates with probability 1. Then
    \begin{align*}
    	\mathbb P\left[\textsc{ssms}(\mathcal S, (\Lambda,\sigma), v, \ell) = i\right] = \mu_v^\sigma(i).
    \end{align*}
\end{thm}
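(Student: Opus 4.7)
The plan is to reduce the correctness claim to the recursive structure of the algorithm: assuming each child call returns the correct conditional marginal, a direct calculation shows the parent call does too. Almost-sure termination is what makes the induction well-founded.

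First I would condition on where the freshly sampled $y\sim U[0,1]$ lands. If $y\in I_i$ for some $i\in[q]$, which happens with probability $p_v^i$, the call returns $i$. If $y\in I_0$, \textsc{bd-split} is invoked and, using random bits independent of $y$, produces a partial assignment $T$ on $S_\ell(v)\setminus\Lambda$; the call returns the unique $i$ with $y\in J_i(T)\subset I_0$, and $|J_i(T)| = \mu_v^{\sigma\oplus T}(i) - p_v^i$ by construction. Independence of $y$ and $T$ gives
\begin{align*}
\mathbb{P}[\textsc{ssms}=i]
  = p_v^i + \mathbb{E}_T\!\left[\mu_v^{\sigma\oplus T}(i) - p_v^i\right]
  = \mathbb{E}_T\!\left[\mu_v^{\sigma\oplus T}(i)\right].
\end{align*}

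Next I would identify the distribution of $T$. The subroutine \textsc{bd-split} calls \textsc{ssms} sequentially on $w_1,\dots,w_m\in S_\ell(v)\setminus\Lambda$, each time conditioning on the spins set by prior sub-calls. Granting inductively that each sub-call returns a spin from its correct conditional marginal $\mu_{w_j}^{\sigma'}$, the chain rule of conditional probability gives $T\sim\mu_{S_\ell\setminus\Lambda}^\sigma$. Substituting into the previous display and invoking the tower property,
\begin{align*}
\mathbb{E}_T\!\left[\mu_v^{\sigma\oplus T}(i)\right]
  = \sum_\tau \mu_{S_\ell\setminus\Lambda}^\sigma(\tau)\,\mu_v^{\sigma\oplus\tau}(i)
  = \mu_v^\sigma(i),
\end{align*}
as required. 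Feasibility of the conditioning configurations is preserved throughout, since each extension $\tau_{w_j}$ is drawn from a distribution that assigns positive mass to $\tau_{w_j}$.

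The main obstacle is formalising this induction: the recursion tree $T_A$ is random, so one cannot simply induct on tree size. I would handle this by truncation. Define $\textsc{ssms}^{(k)}$ to be the variant that aborts and returns a fixed default symbol once $k$ \textsc{ssms} calls have been made. An ordinary finite induction on $k$, with inductive step given by the computation above applied up to a truncation error, shows that the output distribution of $\textsc{ssms}^{(k)}$ differs from $\mu_v^\sigma$ by at most $\mathbb{P}[|T_A|>k]$ in total variation. By the hypothesis of almost-sure termination this tends to $0$, and since $\textsc{ssms}^{(k)}$ agrees with \textsc{ssms} on $\{|T_A|\le k\}$, the limit $k\to\infty$ establishes the theorem. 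Put differently, the recursive structure yields a fixed-point equation for the output distribution; almost-sure termination is what singles out $\mu_v^\sigma$ as the unique solution.
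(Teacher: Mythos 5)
Your overall strategy --- truncate the recursion, show the truncated version converges to the untruncated one in total variation, and pass to the limit using almost-sure termination --- is the same as the paper's, and the core calculation in your first displays is essentially the inductive step of the paper's lemma. But the step where you assert that ``an ordinary finite induction on $k$ shows the output distribution of $\textsc{ssms}^{(k)}$ differs from $\mu_v^\sigma$ by at most $\mathbb{P}[|T_A|>k]$'' is where the gap sits.

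There are two problems with that assertion. First, the coupling you describe only yields $d_{\mathrm{TV}}(\nu_k,\nu)\le\mathbb{P}[|T_A|>k]$, where $\nu$ is the (a priori unknown) output distribution of the untruncated algorithm; it does not compare $\nu_k$ to $\mu_v^\sigma$. Inferring the latter from the former is circular, since $\nu=\mu_v^\sigma$ is exactly what you are trying to establish. Second, a direct error-propagation induction does not obviously close either. Truncating by a shared budget of $k$ total calls gives no well-founded inductive step, because the budget remaining for each subsequent subcall is a random quantity depending on how the earlier subcalls unfolded. And if you instead truncate at recursion depth $h$, the error that the default symbol introduces at the leaves gets multiplied by roughly $q\,f(\ell)\,g(\ell)$ per level as it propagates upward, so controlling it would need the contraction hypothesis $q\,f(\ell)\,g(\ell)<1$ of Theorem~\ref{runtime} --- strictly stronger than the almost-sure-termination hypothesis assumed here.

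The paper avoids both difficulties with one move. Instead of aborting with a default symbol, the bounded algorithm \textsc{bounded-ssms} truncates at recursion depth $h$ and at the leaves calls an oracle that samples from the exact conditional marginal $\mu_w^{\sigma'}$. Truncating by depth gives every subcall the identical budget $h-1$, so the induction on $h$ is clean and well-founded; and the oracle makes the truncated algorithm \emph{exactly} correct for every $h$, not approximately correct, which is precisely the anchor your argument lacks. The coupling step you wrote then applies verbatim to conclude $d_{\mathrm{TV}}(\nu,\mu_v^\sigma)\le\mathbb{P}[\mathrm{height}(T_A)\ge h]\to 0$. Your closing observation that ``almost-sure termination singles out $\mu_v^\sigma$ as the unique solution of the fixed-point equation'' is the right intuition, but the oracle device is what turns it into a proof.
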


Now rather than directly relating the output distribution of the algorithm to the true probability, we want to pass between the two via a similar algorithm --- which is just a depth-bounded version of the original one --- that is more readily seen to be correct.\footnote{The issue with the original version of the algorithm is that the recursion it employs is not well founded, so we cannot directly prove correctness by induction, as we sould like to do.}

Assume we have an oracle $\mathcal O$ which takes the input of \textsc{ssms} and a spin $i$ and produces the true probability $\mu_v^\sigma(i)$.  (Since we have strong spatial mixing, this probability is well defined.)  For some height $h \in \mathbb N$, our new bounded algorithm will exactly match \textsc{ssms} until the recursion depth reaches~$h$. At that point, the algorithm will use the oracle on all nodes at the deepest level. This is Algorithm \ref{alg2}.

\begin{algorithm} \label{alg2}
\DontPrintSemicolon
 \caption{$\textsc{bounded-ssms}(\mathcal S, (\Lambda, \sigma), v, \ell, h)$}
 \KwIn{The algorithm takes a spin system $\mathcal S$, a set of known vertices $\Lambda \subset V$ with a configuration $\sigma \in \Omega_\Lambda$, a vertex to sample $v \notin \Lambda$, a distance $\ell \in \mathbb N,$ and a height $h \in \mathbb N$.}
 \KwOut{The algorithm returns the partial configuration passed in with a spin at $v$ as well: $(\Lambda, \sigma) \oplus (v,i)$ for some $i \in [q]$.}
 \eIf{$h = 0$}{
 Use $\mathcal O$ to sample $i \sim \mu_v^\sigma$ \\
 \textbf{return} $(\Lambda, \sigma) \oplus (v,i)$
 }
 {
 $S_\ell\leftarrow \{w \in V : d_G(v,w) = \ell\}$ \;
\For{$i \in [q]$}{
   $p_v^i \leftarrow \min_{\tau \in \Omega_{S_\ell\setminus\Lambda}} \mu_v^{\sigma \oplus \tau}(i)$ \;
   $I_i \leftarrow \left[\sum_{j = 1}^{i-1} p_v^j, \sum_{j = 1}^i p_v^j\right)$ \;
   }
 $p_v^0 \leftarrow 1 - \sum_{i \in [q]} p_v^i$ \;
 $I_0 \leftarrow \left[1-p_v^0,1\right]$ \;
 Sample $y \sim U[0,1]$, a realisation of a uniform $[0,1]$ random variable. \;
 \eIf{$y \in I_0$}{
   $(J_1,...,J_q) \leftarrow \textsc{bounded-bd-split} (\mathcal S, (\Lambda, \sigma), v, \ell, (p_v^1,...,p_v^q),S_\ell\setminus\Lambda,h-1)$ \;
   Find $i \in [q]$ such that $y \in J_i$ \;
   \textbf{return}$((\Lambda, \sigma) \oplus (v,i))$ \;
   }{
     Find $i$ such that $y \in I_i$ \;
     \textbf{return}$((\Lambda, \sigma) \oplus (v,i))$ \;
   }
   }
\end{algorithm}

\begin{algorithm}
\DontPrintSemicolon
 \caption{\textsc{bounded-bd-split} $(\mathcal S, (\Lambda, \sigma), v, \ell, (p_v^1,...,p_v^q),h)$}
 Give $S_\ell\setminus\Lambda$ an ordering $S_\ell\setminus\Lambda = \{w_1,...,w_m\}$. \;
   $(\Lambda',\sigma') \leftarrow (\Lambda,\sigma)$\;
   \For{$j \in [1,...,m]$}{
     $(\Lambda',\sigma') \leftarrow $\textsc{bounded-ssms}$(\mathcal S, (\Lambda',\sigma'), w_j, \ell, h)$
     }
   \For{$i \in [q]$}{
     $\rho_v^i \leftarrow \mu_v^{\sigma'}(i) - p_v^i$ \;
     $I_i \leftarrow \left[\sum_{j = 1}^{q} p_v^j + \sum_{k = 1}^{i-1}\rho_v^i, \sum_{j = 1}^{q} p_v^j + \sum_{k = 1}^{i}\rho_v^i \right)$ \;
     }
   \textbf{return} ($I_1,...,I_q$)
\end{algorithm}

Now we see the correctness:

\begin{lem}
	Under the conditions of Theorem~\ref{correctness}, for all $h \in \mathbb N$ and for all $i \in [q]$
	\begin{align*}
      \mathbb P\left[\textsc{bounded-ssms}(\mathcal S, (\Lambda, \sigma),v,\ell,h) = i \right] = \mu_v^\sigma(i).
    \end{align*}
\end{lem}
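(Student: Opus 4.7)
The plan is to prove the lemma by induction on the height parameter $h$. The base case $h=0$ is immediate: \textsc{bounded-ssms} simply queries the oracle $\mathcal O$, which by assumption returns a spin drawn from $\mu_v^\sigma$ exactly.

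For the inductive step, I would assume the claim for $h-1$ and analyse what happens when \textsc{bounded-ssms} is called with height $h$. The first step is to understand the distribution of the configuration produced on $S_\ell\setminus\Lambda$ by the sequential recursive calls inside \textsc{bounded-bd-split}. Order the vertices as $w_1,\dots,w_m$; by the inductive hypothesis applied to each call in turn, the spin assigned to $w_j$ is distributed as $\mu_{w_j}^{\sigma\oplus\tau_{<j}}$, where $\tau_{<j}$ denotes the spins previously assigned to $w_1,\dots,w_{j-1}$. Multiplying these conditional distributions and applying the chain rule (which is valid because the Gibbs specification guarantees consistency of the relevant conditional distributions), the joint distribution of the produced configuration $\tau$ on $S_\ell\setminus\Lambda$ is exactly the marginal $\mu_{S_\ell\setminus\Lambda}^\sigma$.

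Next I would compute $\mathbb{P}[\textsc{bounded-ssms}(\dots,h)=i]$ by splitting on the event $\{y\in I_0\}$. Since $y\sim U[0,1]$ is sampled independently of the randomness driving the recursive subcalls, conditioning on $y\in I_0$ leaves $y$ uniform on $I_0$, which has length $p_v^0$, while $\tau$ is still distributed as $\mu_{S_\ell\setminus\Lambda}^\sigma$. Given $\tau$, the sub-interval $J_i$ has length $\rho_v^i(\tau)=\mu_v^{\sigma\oplus\tau}(i)-p_v^i$, so
\begin{align*}
  \mathbb{P}[\textsc{bounded-ssms}=i]
   = p_v^i + p_v^0\cdot\mathbb{E}_\tau\!\left[\frac{\rho_v^i(\tau)}{p_v^0}\right]
   = p_v^i + \mathbb{E}_\tau\!\left[\mu_v^{\sigma\oplus\tau}(i)-p_v^i\right]
   = \mathbb{E}_\tau\!\left[\mu_v^{\sigma\oplus\tau}(i)\right].
\end{align*}
Finally, by the Gibbs property (equivalently the law of total probability applied to the marginal of $\mu^\sigma$ on $\{v\}\cup(S_\ell\setminus\Lambda)$),
\begin{align*}
  \mathbb{E}_{\tau\sim\mu_{S_\ell\setminus\Lambda}^\sigma}\!\left[\mu_v^{\sigma\oplus\tau}(i)\right]=\mu_v^\sigma(i),
\end{align*}
which completes the induction.

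The main obstacle I anticipate is the careful bookkeeping around the joint distribution of $y$ and $\tau$: I must emphasise that the realisation of $y$ is drawn independently of, and prior to, the randomness feeding the recursive subcalls, so that conditioning on $y\in I_0$ does not bias $\tau$. A secondary subtlety is that the chain-rule identity for the sphere sample requires every intermediate conditional distribution $\mu_{w_j}^{\sigma\oplus\tau_{<j}}$ to be well defined; this follows from feasibility being preserved along the sequence of calls, because each sampled spin has positive probability under the corresponding conditional. Once these two points are nailed down, the rest of the argument is a direct computation.
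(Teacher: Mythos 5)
Your proof is correct and takes essentially the same approach as the paper: induction on $h$, with the base case handled by the oracle and the inductive step handled by recognising that the sequential recursive calls produce a configuration $\tau$ on $S_\ell\setminus\Lambda$ distributed as $\mu_{S_\ell\setminus\Lambda}^\sigma$, so that the $p_v^i$ terms cancel in the expectation. The paper does the same cancellation by direct algebraic manipulation of the sum over $\tau$ rather than by invoking the chain rule and law of total probability explicitly; your version also spells out the independence of $y$ from the recursive randomness and the propagation of feasibility, two points the paper leaves implicit.
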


\begin{proof}
	We prove this by induction on $h$. For the base case of $h = 0$ the lemma is true due to the use of the oracle $\mathcal O$. Now suppose it holds holds for $h = k$.  Given the order $S_\ell(v)\setminus\Lambda = \{w_1,...,w_m\}$ that the algorithm processes the vertices in, we have
	\begin{align*}
      &\mathbb P\left[\textsc{bounded-ssms}(\mathcal S, (\Lambda, \sigma),v,\ell,k+1) = i \right] \\
      &\quad = p_v^i + 
      \sum_{\tau \in \Omega_{S_\ell(v)\setminus\Lambda}} \left(\mu_v^{\sigma \oplus \tau}(i) - p_v^i\right)\\
      &\qquad\qquad\qquad\times\prod_{j=1}^m \mathbb P\left[\textsc{bounded-ssms}\left(\mathcal S, (\Lambda, \sigma) \oplus \bigoplus_{k = 1}^{j-1}(w_k,\tau_{w_k}),w_j,\ell,k\right) = \tau_{w_j} \right] \\
      &\quad = p_v^i + \sum_{\tau \in \Omega_{S_\ell(v)\setminus\Lambda}} \left(\mu_v^{\sigma \oplus \tau}(i) - p_v^i\right) \prod_{j=1}^m \mu_{w_j}^{\left(\Lambda, \sigma) \oplus \bigoplus_{k = 1}^{j-1}(w_k,\tau_{w_k}\right)}(\tau_{w_j}),
    \end{align*}
    where the second inequality comes from the induction hypothesis. Finally, observing that
    \begin{align*}
      p_v^i = \sum_{\tau \in \Omega_{S_\ell(v)\setminus\Lambda(v)}} p_v^i \cdot \prod_{j=1}^m \mu_{w_j}^{\left(\Lambda, \sigma) \oplus \bigoplus_{k = 1}^{j-1}(w_k,\tau_{w_k}\right)}(\tau_{w_j}),
    \end{align*}
    we see that
    \begin{align*}
      &\mathbb P\left[\textsc{bounded-ssms}(\mathcal S, (\Lambda, \sigma),v,\ell,k+1) = i \right] \\
      &\qquad=  p_v^i + \sum_{\tau \in \Omega_{S_\ell(v)\setminus\Lambda(v)}} \left(\mu_v^{\sigma \oplus \tau}(i) - p_v^i\right) \prod_{j=1}^m \mu_{w_j}^{\left(\Lambda, \sigma) \oplus \bigoplus_{k = 1}^{j-1}(w_k,\tau_{w_k}\right)}(\tau_{w_j}) \\
      &\qquad= \sum_{\tau \in \Omega_{S_\ell(v)\setminus\Lambda}} \mu_v^{\sigma \oplus \tau}(i) \prod_{j=1}^m \mu_{w_j}^{\left(\Lambda, \sigma) \oplus \bigoplus_{k = 1}^{j-1}(w_k,\tau_{w_k}\right)}(\tau_{w_j}) \\
      &\qquad= \mu_v^\sigma (i).
    \end{align*}
\end{proof}

Now we use this lemma to bridge the gap between the algorithm and the true probability.

\begin{proof}[Proof of Theorem \ref{correctness}]
	Let $\nu_h$ and $\nu$ be the distributions of \textsc{bounded-ssms}$(\mathcal S, (\Lambda, \sigma),v,\ell,h)$ and \textsc{ssms}$(\mathcal S, (\Lambda, \sigma),v,\ell)$ respectively.
	Recall that
	\begin{align*}
	  d_{\textnormal{TV}}(\nu_h, \nu) = \inf_{\substack{X \sim \nu, \\ Y \sim \nu_h}} \mathbb P[X \not= Y],
    \end{align*}
    where the random variables $X$ and $Y$ are on a joint sample space.  
    While the depth of recursion remains below $h$, we can couple the random choices made by the bounded and unbounded versions of the algorithm. Thus, using $T_A$ and $T_B$ from the run-time argument,
    \begin{align*}
      d_{\textnormal{TV}}(\nu, \nu_h) &\leq \mathbb P[\textnormal{height}(T_A) \geq h] \\
      &\leq \mathbb P[\textnormal{height}(T_B) \geq h] \\
      &\ensuremath{\stackrel{h \to \infty}{\longrightarrow}} 0.
    \end{align*}
    Now since $\nu_h$ is also the correct distribution, we have that for all $h$
    \begin{align*}
      &|\mathbb P\left[\textsc{ssms}(\mathcal S, (\Lambda,\sigma), v, \ell) = i\right] - \mu_v^\sigma(i)| \leq d_\textnormal{TV}(\nu,\nu_h) 
    \end{align*}
    which implies
    $\mathbb P\left[\textsc{ssms}(\mathcal S, (\Lambda,\sigma), v, \ell) = i\right] = \mu_v^\sigma(i)$, as required.
\end{proof}

\section{Examples}
\subsection{The Hardcore model}
In Section~\ref{sec:hardcore} we informally analysed the radius 1 version of the algorithm.  We return to this case as an example application of Theorem~\ref{correctness}.  Setting $\ell=1$, we have $f(1)=\lambda/(1+\lambda)$, $g(1)=\Delta$ and $q=2$.  Then the algorithm has constant expected run-time provided $q\,f(\ell)g(\ell)<1$, which holds when $\lambda<1/(2\Delta-1)$. This is worse than the condition that arose in our informal analysis, owing to some slack in Lemma~\ref{lem:pvzerobound}.  

Now we use Theorem~\ref{general} to find significantly more flexibility. On general graphs of maximum degree $\Delta$ we know from Weitz \cite{weitz2006counting} that the hardcore model exhibits strong spatial mixing whenever 
\begin{align*}
	\lambda < \lambda_c: = \frac{(\Delta-1)^{\Delta-1}}{(\Delta-2)^\Delta}.
\end{align*}
Let $\mathcal{G}$ be a class of graphs of maximum degree $\Delta$ and subexponental growth.
It follows from Theorem~\ref{general} that for all $\lambda < \lambda_c$ and large enough $\ell$ (depending on $\lambda$) the expected run-time of the algorithm is constant per assigned spin.  It is worth noting that for the case of a general graph of maximum degree $\Delta$ it was recently shown, by Chen, Liu and Vigoda~\cite{Chen2020OptimalMO}, that Glauber dynamics mixes in time $O(n \log n)$ when $\lambda < \lambda_c$.  Thus, for general graphs our algorithm is efficient in the same regime as the Glauber dynamics, but we do require the additional condition of subexponential growth.  On the plus side, our algorithm is linear in the number of assigned spins, works for infinite graphs, and produces a perfect sample.  Spinka~\cite[Cor.~1.6]{spinka2020finitary} also shows how to produce perfect samples when $\lambda < \lambda_c$, though his proofs are restricted to the square lattices $\mathbb{Z}^d$.   (It could well be the case that Spinka's approach works in a more general setting.) 

For a more concrete example, consider the integer lattice $\mathbb Z^2$. In the radius 1 case we have efficiency for $\lambda < \frac13$, but by leveraging strong spatial mixing our range increase to at least $\lambda < 2.538$, a bound that was derived by Sinclair, Srivastava, \v Stefankovi\v c and  Yin \cite{sinclair2017spatial}. In fact, the true threshold of strong spatial mixing on $\mathbb Z^2$ is believed to be even higher: approximately 3.7962 \cite{baxter1980hard, vera2013improvedjournal}.  Note that the graphs in this example have subexponental (indeed, polynomial) growth.  In the case of $\mathbb Z^2$, Spinka's approach~\cite{spinka2020finitary} has the same range of validity as ours.

\subsection{The Monomer-Dimer model}

Another widely studied model which can be viewed as a spin system is the \textit{monomer-dimer} model. When we restrict ourselves to a finite graph $G = (V,E)$, the monomer-dimer model with dimer activity $\gamma$ is the distribution over matchings on the graph with the probability of a matching~$M$ being proportional to $\gamma^{|M|}$. To view this model as a spin system we simply consider the line-graph of $G$.

\begin{dfn}
	Let $G = (V,E)$ be a finite graph. The line-graph $L(G)$ is the graph with one vertex corresponding to each edge in $G$ and an edge between any $e, e' \in E$ which are incident.
\end{dfn}

The monomer-model can then be represented as the hardcore model on $L(G)$ with $\lambda = \gamma$. The monomer-dimer model exhibits strong spatial mixing when $G$ has sub-exponential neighbourhood growth --- a fact essentially established by van den Berg and Brouwer~\cite{vandenBergBrouwer} and extended by Bayati et al.~\cite{bayati2007} --- so returning to Theorem~\ref{general} we see that \textsc{ssms} allows us to sample partial configurations of the monomer-dimer model at any fixed $\lambda$ in linear time, at least for graphs of subexponential growth.

Under the same conditions, there already exists a sampling algorithm, based on Glauber dynamics, that runs in polynomial time \cite{Jerrum1989ApproximatingTP}.  Indeed, recent work of Chen et al.~\cite{Chen2020OptimalMO} shows that Glauber dynamics mixes in time $O(n\log n)$.  The advantage of the current algorithm is that it is exact and we can sample just a partial configuration on the graph.

\subsection{The $q$-colour Model}

Another well-known spin system model is the $q$-colour model where each spin corresponds to a colour. Here we define for every $i,j \in [q]$
\begin{align*}
  b(i) = 1, \quad A(i,j) = \I{i \not= j},
\end{align*}
i.e., a colour cannot be adjacent to itself but every other configuration is equally weighted. By Theorem~\ref{general}, whenever we have a graph~$G$ (more precisely, a class of graphs) with sub-exponential neighbourhood growth on which the $q$-colour model exhibits strong spatial mixing, \textsc{ssms} will allow us to sample partial configurations of the $q$-colour model on the graph in linear time.

One such regime is revealed by Goldberg, Martin and Paterson \cite{Goldberg2005StrongSM}. In addition to our neighbourhood growth condition, if $G$ is triangle-free with maximum degree $\Delta \geq 3$ then we have strong spatial mixing when
\begin{align*}
  q > \alpha \Delta - \gamma \approx 1.76\Delta - 0.47,
\end{align*}
where $\alpha^\alpha = e$ and $\gamma = \frac{4\alpha^3 - 6\alpha^2 - 3\alpha + 4}{2(\alpha^2 - 1)}$.

Another strong spatial mixing regime is described by Efthemiou et al.~\cite{efthymiou2019improved}. Here, for the $(d+1)$-regular tree, strong spatial mixing holds when $q > 1.59 d$.  This result raises hope that the dependence of $q$ on $\Delta$ can be improved in due course.

\subsection{The ferromagnetic Ising model on the 2-dimensional square lattice}

The ferromagnetic Ising model (in the absence of an external field) is a spin model characterised by the following interactions:
\begin{align*}
  b = 
\begin{pmatrix}
  1 \\
  1
\end{pmatrix}, \quad A =
\begin{pmatrix}
  \lambda && 1 \\
  1 && \lambda
\end{pmatrix},
\end{align*}
with $\lambda\geq1$.  It is known that this model undergoes a phase transition at $\lambda_c=1+\sqrt2$.  When $\lambda<\lambda_c$ there is a unique Gibbs measure, while for $\lambda>\lambda_c$ there are two.  On the 2-dimensional square lattice the model exhibits strong spatial mixing in the uniqueness region;  this is a consequence of a result of Martinelli, Olivieri and Schonmann~\cite{Martinelli1994weakstrong} that bootstraps weak spatial mixing to strong spatial mixing in 2-dimensional systems.  As a consequence, by Theorem~\ref{runtime}, we can perfectly sample partial configurations from the ferromagnetic Ising model on the 2d square lattice when $\lambda<\lambda_c$.   Note that this task was first achieved by van den Berg and Steif~\cite{vdBerg1999codings} using a more complicated algorithm inspired by Propp and Wilson's Coupling From The Past (CFTP).  Spinka~\cite{spinka2020finitary} also covers this situation in the context of a more general treatment.

One may wonder whether it is possible to sample perfectly from one of the two measures that exist when $\lambda>\lambda_c$.  A priori, there is no obvious obstacle.  However, van den Berg and Steif~\cite{vdBerg1999codings} show that this task is impossible on information-theoretic grounds, even with unbounded computational resources.

\section{Future Work}

While strong spatial mixing is sufficient for efficiency of the algorithm, we have not proven that it is necessary. Indeed, in the analysis there remains the possibility that strong spatial mixing is overkill for ensuring efficiency---strong spatial mixing does not tell us about the average proportion of bad events in the algorithm, but just gives us the worst case proportion of bad events.

The question then is in what ways we can relax this condition. A few possible directions include

\begin{itemize}
	\item Is there a condition which exactly captures this average case of bad events?
	\item What do conditions on properties like the connective constant tell us about the behaviour of the algorithm?
	\item What properties of a spin system do we require in general for linear time algorithms?
\end{itemize}

\printbibliography
\end{document}